\newtheorem{theorem}{Theorem}
\newtheorem{problem}{Problem}
\newtheorem{proposition}{Proposition}
\newtheorem{corollary}{Corollary}
\newtheorem{lemma}{Lemma}
\theoremstyle{remark} \newtheorem{zabelejka}{Remark}
\title{A Calibration Algorithm for Microelectromechanical Systems Accelerometers in Inertial Navigation Sensors}
\author[1]{Svetoslav Nakov}
\author[2]{Tihomir Ivanov}
\affil[1]{Institute of Mechanics and Biomechanics, Bulgarian Academy of Sciences,\textit{ sisqo\textunderscore nakov@yahoo.com}}
\affil[2]{Institute of Mathematics and Informatics, Bulgarian Academy of Sciences, \textit{tihomir\textunderscore ivanov@ymail.com}}
\begin{document}
\maketitle
\begin{abstract}
In the present work we develop an algorithm for calibrating MEMS sensors, which accounts for the nonorthogonality of the accelerometers' axis, as well as for the constant bias and scaling errors. We derive an explicit formula for computing the calibrated acceleration, given data from the sensors. We also study the error, that is caused by the nonorthogonality of the axis. 
\end{abstract}
\section{Introduction}
Nowadays, systems, using microelectromechanical systems (MEMS) sensors (e.g. accelerometers, gyroscopes, etc.) find more and more applications. For instance, they are used for video stabilization, mobile applications and many other things. Among the reasons for their wide use is the fact that they are very cheap and have miniature size (usually less than 10x10x5 mm) and consume little power. However, there are a lot of unexplored fields (like navigation) where they could be used once the necessary mathematical knowledge is developed. For further information, see for example \cite{Woodman}. 

In the present work  we address MEMS accelerometers in particular. There are many sources of error that affect their behaviour \cite{Woodman}. We develop an algorithm for calibrating the device, i.e. compensating for the constant bias error (the offset of its output signal, when the device does not undergo any acceleration), scaling errors (the deviation of the accelerometer's scale from the unit we want to use, like $m/s^2$) and nonorthogonality of the axis.

We have a system, consisting of three accelerometers, that measure the linear acceleration in three ``almost'' orthogonal directions. We shall assume that the angle between each two of those directions is $90^\circ\pm2\%$ \cite{Specification}. 
The relationship between accelerometer's outputs and the measured acceleration can be well modeled as a linear function \cite{Chatfield}. In Section 2 we present the calibration algorithm. Similar approach to the problem is proposed, e.g., in \cite{Forsberg/Grip}, \cite{Grip/Sabourova} (see also the references within those papers). There a six-parameter and a nine-parameter linear models are developed for calibrating the device. Both of those approaches use an orthogonal coordinate system, with respect to which the corresponding analysis is carried. We study the problem by working in the coordinate system defined by the directions in which the accelerometers measure linear acceleration. We find this to be the more natural way and, thus, simplifying the analysis. Also, an usual thing to do is to approximate the sine and the cosine functions of small angles with the respective angles (which leads to the linearity of the aforementioned models). We do not use this approximation in order to obtain maximal accuracy in the model. 

In Section 3 we study computationally the error that can be caused by not considering the non-orthogonality of the axis. We show that even small deviations in the directions of the accelerometers can lead to catastrophic error in the estimated position of the body within a few seconds of integrating.

\section{Mathematical Model}
Let $\hat{a_x}$, $\hat{a_y}$ and $\hat{a_z}$ be the real values
that the accelerometers show and $a_x$, $a_y$ and $a_z$ be the corresponding calibrated values of the acceleration in the $x$, $y$ and $z$ directions, respectively, with $\vec{e_1}$, $\vec{e_2}$ and $\vec{e_3}$ being unit vectors in those directions. We use the following model to calibrate the accelerometers \cite{Chatfield}:
\begin{equation}
\begin{aligned}
a_x&=(\hat{a_x}-s_1)/b_1\\
a_y&=(\hat{a_y}-s_2)/b_2\\
a_z&=(\hat{a_z}-s_3)/b_3
\end{aligned},
\label{theModel}
\end{equation}
where $s_1$, $s_2$, $s_3$ (which we call shifts) and $b_1$, $b_2$, $b_3$ (which we call scaling 
coefficients) are yet to be determined.
In matrix form (\ref{theModel}) can be written as
\[
\underbrace{
\left[
\begin{array}{c}
a_x\\
a_y\\
a_z
\end{array}
\right]}_{\bm{a}}=
\underbrace{
\left[
\begin{array}{ccc}
1/b_1&0&0\\
0&1/b_2&0\\
0&0&1/b_3
\end{array}
\right]}_{T}
\underbrace{
\left[
\begin{array}{c}
\widehat{a_x}\\
\widehat{a_y}\\
\widehat{a_z}
\end{array}
\right]}_{\bm{\widehat{a}}}-
\underbrace{\left[
\begin{array}{c}
s_1/b_1\\
s_2/b_2\\
s_3/b_3
\end{array}
\right]}_{\bm{s}}.
\]
Using the introduced notation, we obtain the equation
\begin{equation}
\bm{a}=T\bm{\widehat{a}}-\bm{s}
\label{theModelMatrix}
\end{equation}
\begin{proposition}
If we assume that the axis of the accelerometers are orthogonal, then the acceleration is
\[
a_{orth}:=\sqrt{a_x^2+a_y^2+a_z^2}
\]
\end{proposition}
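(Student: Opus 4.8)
The plan is to interpret $a_x$, $a_y$, $a_z$ as the scalar components of the physical acceleration vector along the three measurement directions $\vec{e_1}$, $\vec{e_2}$, $\vec{e_3}$, and then to compute the length of that vector directly. The acceleration vector itself is a frame-independent physical object, so its magnitude is the quantity we wish to identify with $a_{orth}$; the work is simply to express that magnitude in terms of the components the accelerometers report.

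First I would write the acceleration vector as
\[
\vec{a} = a_x\,\vec{e_1} + a_y\,\vec{e_2} + a_z\,\vec{e_3},
\]
which is precisely what it means for $a_x$, $a_y$, $a_z$ to be the components in the respective directions. Taking the Euclidean inner product of $\vec{a}$ with itself and expanding by bilinearity gives
\[
|\vec{a}|^2 = \sum_{i=1}^{3}\sum_{j=1}^{3} a_i a_j \,\langle \vec{e_i}, \vec{e_j}\rangle,
\]
where I abbreviate $a_1,a_2,a_3$ for $a_x,a_y,a_z$.

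Next I would invoke the two standing hypotheses. The vectors $\vec{e_1}$, $\vec{e_2}$, $\vec{e_3}$ are unit vectors by definition, so $\langle\vec{e_i},\vec{e_i}\rangle = 1$; and the orthogonality assumption gives $\langle\vec{e_i},\vec{e_j}\rangle = 0$ for $i\neq j$. Equivalently, the Gram matrix of the triple is the identity, i.e. $\langle\vec{e_i},\vec{e_j}\rangle=\delta_{ij}$. Substituting this into the double sum collapses it to $a_x^2+a_y^2+a_z^2$, and taking the nonnegative square root yields $|\vec{a}| = \sqrt{a_x^2+a_y^2+a_z^2} =: a_{orth}$, as claimed.

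There is no serious obstacle here: the statement is essentially the Pythagorean theorem in three dimensions, and the only real content is identifying orthogonality as exactly the condition that makes the cross terms vanish. The one point worth stating explicitly is where the hypothesis enters: without it, the off-diagonal Gram entries survive and the naive formula $\sqrt{a_x^2+a_y^2+a_z^2}$ differs from the true magnitude by the cross terms $2\sum_{i<j}a_ia_j\cos\theta_{ij}$, where $\theta_{ij}$ is the angle between $\vec{e_i}$ and $\vec{e_j}$. This discrepancy is presumably the error the paper goes on to analyse in the non-orthogonal case.
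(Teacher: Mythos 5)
Your proof is correct, and it reaches the conclusion by a slightly different (though equally elementary) route than the paper. The paper's one-line argument combines the relations $a_x=\|\vec{a}\|\cos\alpha$, $a_y=\|\vec{a}\|\cos\beta$, $a_z=\|\vec{a}\|\cos\gamma$ (formalized just afterwards as Lemma 1) with the classical identity that the sum of the squares of the directional cosines equals $1$; you instead expand $\vec{a}$ in the basis $\vec{e_1},\vec{e_2},\vec{e_3}$ and collapse the Gram matrix to the identity. These are dual phrasings of the same Pythagorean fact, but yours is self-contained and pinpoints exactly where orthogonality enters. One subtlety is worth making explicit: your opening step $\vec{a}=a_x\vec{e_1}+a_y\vec{e_2}+a_z\vec{e_3}$ identifies the measured values (which, per Lemma 1, are the projections $\langle\vec{a},\vec{e_i}\rangle$) with the expansion coefficients of $\vec{a}$. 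That identification is itself a consequence of orthogonality, and it is precisely what fails in the nonorthogonal case --- the paper stresses (footnote and Lemma 2) that the measured values are then no longer the affine projections. For the same reason, your closing remark is only partially accurate: the cross-term formula $\sqrt{\textstyle\sum_i a_i^2+2\sum_{i<j}a_ia_j\cos\theta_{ij}}$ holds when the $a_i$ are the affine coefficients $\overline{a_x},\overline{a_y},\overline{a_z}$ (this is the paper's Lemma 3), whereas the magnitude expressed in the \emph{measured} values $a_x,a_y,a_z$ is the considerably more involved expression of Proposition 2, obtained by first inverting the Gram system of Lemma 2.
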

\begin{proof}
The proof is obvious, using the fact that the sum of the squares of the directional cosines is 1.
\end{proof}
Since the axis are not orthogonal, let us denote the angles between the axis as follows (see Fig.\ref{fig}):
\begin{equation}
	\phi:=\angle(Ox,Oy),\ \psi:=\angle(Ox,Oz),\ \theta:=\angle(Oy,Oz)
\label{angles}	
\end{equation}	
Let $\vec{a}$ be an arbitrary acceleration acting on the sensor. Let $\overline{a_x}$, $\overline{a_y}$ and
$\overline{a_z}$  be the affine projections of $\vec{a}$ onto the $x$, $y$ and $z$ axis, respectively.
It is important to note 
that the measured values\footnote{For the time being, let us think that the accelerometeres measure correctly, i.e. $a_x$, $a_y$ and $a_z$ are the measured values from the accelerometers.}  of the acceleration in the $x$, $y$ and $z$ directions are not the affine projections of $\vec{a}$ onto the axis.
Let us denote (see Fig.\ref{fig})
\[\alpha:=\angle(\vec{a},\vec{e_1}), \beta:=\angle(\vec{a},\vec{e_2}),\gamma:=\angle(\vec{a},\vec{e_3})\]
\begin{figure}[h!]
\begin{center}
\includegraphics[width=0.75\textwidth]{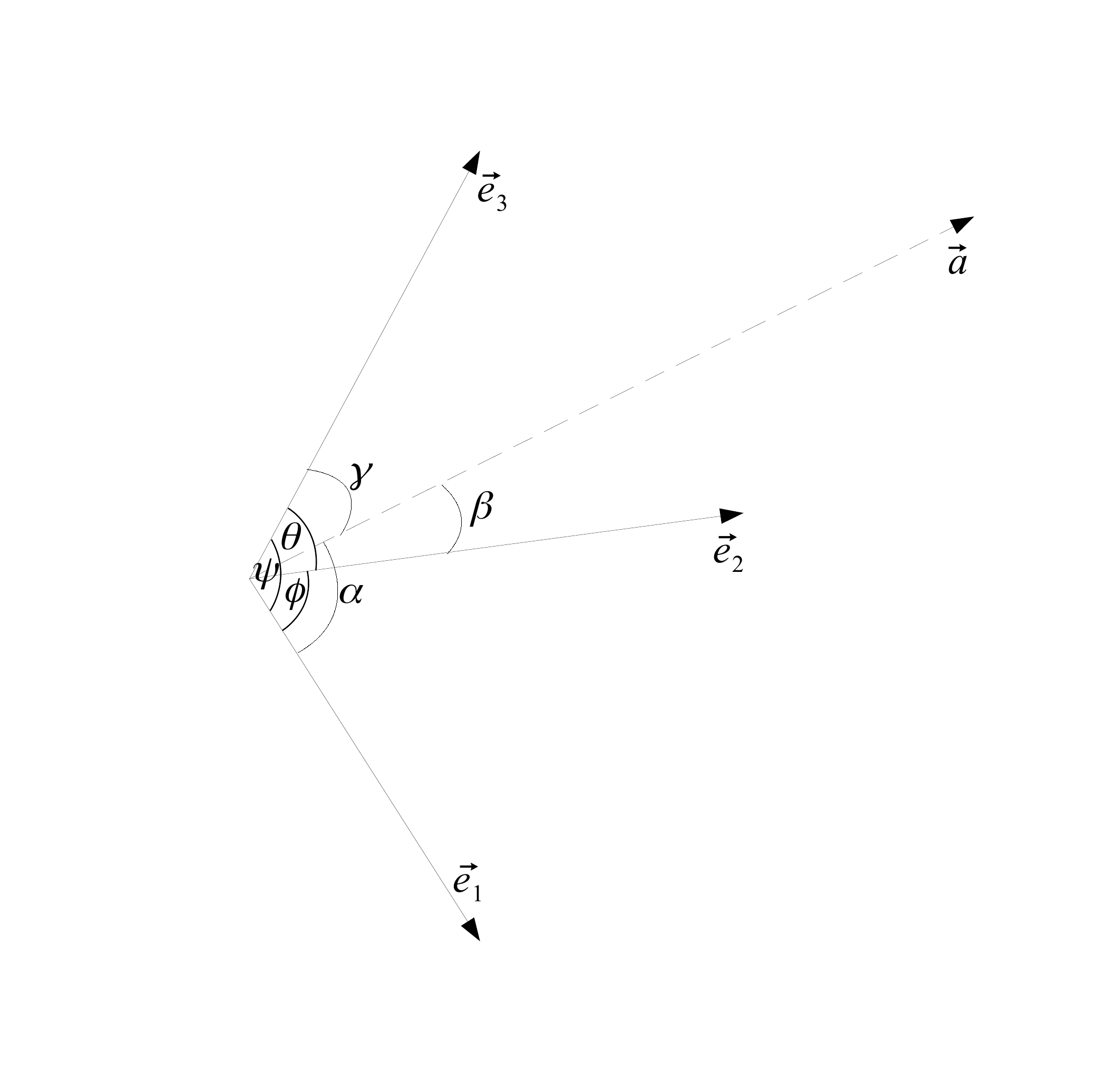}
\end{center}
\caption{Acceleration vector in the affine coordinate system}
\label{fig}
\end{figure}
It can be easily seen that the following Lemma holds true:
\begin{lemma}
The equations
\begin{equation}
\begin{aligned}
a_x&=\|\vec{a}\|\cos\alpha\\
a_y&=\|\vec{a}\|\cos\beta\\
a_z&=\|\vec{a}\|\cos\gamma
\end{aligned}
\label{eq:1}
\end{equation}
are valid.
\end{lemma}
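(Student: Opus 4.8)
The plan is to reduce each of the three equations to the geometric form of the scalar (dot) product, after first pinning down the correct physical meaning of the measured quantities. The central observation, foreshadowed by the remark immediately preceding the statement, is that an accelerometer whose sensitive direction is $\vec{e_1}$ does not report the affine projection $\overline{a_x}$ of $\vec{a}$ onto the $x$-axis, but rather the \emph{orthogonal} projection of $\vec{a}$ onto the line spanned by $\vec{e_1}$. Equivalently, the value it senses is the component of $\vec{a}$ along $\vec{e_1}$, namely the dot product $\vec{a}\cdot\vec{e_1}$.

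First I would make this identification explicit and write $a_x=\vec{a}\cdot\vec{e_1}$, and likewise $a_y=\vec{a}\cdot\vec{e_2}$ and $a_z=\vec{a}\cdot\vec{e_3}$. Next I would invoke the geometric definition of the dot product, $\vec{a}\cdot\vec{e_1}=\|\vec{a}\|\,\|\vec{e_1}\|\cos\alpha$, where $\alpha=\angle(\vec{a},\vec{e_1})$ by the definitions introduced just before the figure. Since $\vec{e_1}$, $\vec{e_2}$, $\vec{e_3}$ are unit vectors, the norms $\|\vec{e_1}\|=\|\vec{e_2}\|=\|\vec{e_3}\|=1$ drop out, yielding $a_x=\|\vec{a}\|\cos\alpha$; repeating verbatim with $\beta$ and $\gamma$ produces the remaining two equations in (\ref{eq:1}).

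The main obstacle here is not computational — the algebra is a single line — but conceptual: one must justify that the quantity the device actually senses is the orthogonal projection $\vec{a}\cdot\vec{e_1}$ and not the affine projection along the non-orthogonal frame. This is exactly the distinction emphasized in the text and is what makes the non-orthogonal case nontrivial; it is also the reason that the three \emph{unit vectors} $\vec{e_1}$, $\vec{e_2}$, $\vec{e_3}$, rather than the axes themselves, are the objects appearing in (\ref{eq:1}). Once this modeling point is granted, the conclusion is immediate, which is why the statement can be asserted as easily seen.
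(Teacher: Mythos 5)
Your proof is correct and is precisely the argument the paper leaves implicit: the paper offers no proof of this lemma at all, merely asserting it as ``easily seen.'' Your identification of the measured values as orthogonal projections $\vec{a}\cdot\vec{e_1}$, $\vec{a}\cdot\vec{e_2}$, $\vec{a}\cdot\vec{e_3}$ onto the unit sensing directions, followed by the geometric dot-product formula, is exactly the intended one-line justification, and you correctly flag the modeling point (orthogonal rather than affine projection) that the surrounding text emphasizes.
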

We want to express the affine projections $\overline{a_x}$, $\overline{a_y}$ and
$\overline{a_z}$ of $\vec{a}$ onto the coordinate axis with $a_x$, $a_y$ and $a_z$. 
\begin{lemma}
\label{lemma2}
For the affine projections of the acceleration $\overline{a_x}$, $\overline{a_y}$ and
$\overline{a_z}$ the following equalities hold true:
\begin{equation}
\begin{aligned}
\overline{a_x}&=\frac{-\sin^2\theta a_x+(\cos\phi-\cos\theta\cos\psi)a_y
+(\cos\psi-\cos\phi\cos\theta)a_z}{-1+\cos^2\phi+\cos^2\psi+\cos^2\theta-
2\cos\phi\cos\psi\cos\theta}\\
\overline{a_y}&= \frac{(\cos\phi-\cos\theta\cos\psi)a_x-\sin^2\psi a_y
+(\cos\theta-\cos\phi\cos\psi)a_z}{-1+\cos^2\phi+\cos^2\psi+\cos^2\theta-
2\cos\phi\cos\psi\cos\theta}\\
\overline{a_z}&= \frac{(\cos\psi-\cos\phi\cos\theta)a_x+(\cos\theta-\cos\phi\cos\psi) a_y
-\sin^2\phi a_z}{-1+\cos^2\phi+\cos^2\psi+\cos^2\theta-
2\cos\phi\cos\psi\cos\theta}
\end{aligned}
\label{affineByReal}
\end{equation}
\end{lemma}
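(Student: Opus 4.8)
The plan is to relate both the measured values and the affine projections to one and the same geometric object, namely $\vec{a}$ expanded in the (non-orthogonal) basis $\{\vec{e_1},\vec{e_2},\vec{e_3}\}$. By the very definition of the affine projections we may write $\vec{a}=\overline{a_x}\,\vec{e_1}+\overline{a_y}\,\vec{e_2}+\overline{a_z}\,\vec{e_3}$. On the other hand, since $\vec{e_1},\vec{e_2},\vec{e_3}$ are unit vectors, Lemma~(\ref{eq:1}) reads $a_x=\|\vec{a}\|\cos\alpha=\vec{a}\cdot\vec{e_1}$, and likewise $a_y=\vec{a}\cdot\vec{e_2}$, $a_z=\vec{a}\cdot\vec{e_3}$. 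Thus the measured values are precisely the orthogonal projections of $\vec{a}$ onto the axes, whereas $\overline{a_x},\overline{a_y},\overline{a_z}$ are its coordinates in the skew basis; the task is to pass from one to the other.

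First I would take the scalar product of the expansion of $\vec{a}$ successively with $\vec{e_1}$, $\vec{e_2}$ and $\vec{e_3}$. Using the notation (\ref{angles}), so that $\vec{e_1}\cdot\vec{e_2}=\cos\phi$, $\vec{e_1}\cdot\vec{e_3}=\cos\psi$, $\vec{e_2}\cdot\vec{e_3}=\cos\theta$ and $\vec{e_i}\cdot\vec{e_i}=1$, this yields the linear system
\begin{equation*}
\begin{bmatrix} a_x \\ a_y \\ a_z \end{bmatrix}
= G \begin{bmatrix} \overline{a_x} \\ \overline{a_y} \\ \overline{a_z} \end{bmatrix},
\qquad
G := \begin{bmatrix} 1 & \cos\phi & \cos\psi \\ \cos\phi & 1 & \cos\theta \\ \cos\psi & \cos\theta & 1 \end{bmatrix},
\end{equation*}
where $G$ is the Gram matrix of the basis. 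The whole content of the Lemma is then the single statement that the vector of affine projections equals $G^{-1}$ applied to the vector of measured values.

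The remaining step is to invert $G$ via the adjugate formula $G^{-1}=\tfrac{1}{\det G}\operatorname{adj}G$. A direct cofactor expansion gives $\det G = 1-\cos^2\phi-\cos^2\psi-\cos^2\theta+2\cos\phi\cos\psi\cos\theta$, which is exactly the negative of the common denominator appearing in (\ref{affineByReal}); the $2\times 2$ minors produce $\sin^2\theta,\sin^2\psi,\sin^2\phi$ on the diagonal and the mixed quantities $\cos\phi-\cos\theta\cos\psi$, $\cos\psi-\cos\phi\cos\theta$, $\cos\theta-\cos\phi\cos\psi$ off the diagonal. Since $G$ is symmetric, so is $\operatorname{adj}G$, which roughly halves the number of distinct cofactors to evaluate. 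Multiplying $G^{-1}$ by $(a_x,a_y,a_z)^{\mathsf T}$ and then factoring $-1$ out of both numerator and denominator reproduces (\ref{affineByReal}) row by row.

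The only genuine obstacle is bookkeeping: tracking the alternating cofactor signs so that they match those displayed in the statement, and confirming that $\det G\neq 0$ so that $G$ is invertible at all. The latter is where the geometry enters, since $\det G>0$ is equivalent to $\vec{e_1},\vec{e_2},\vec{e_3}$ being linearly independent; in the near-orthogonal regime considered here, with $\phi,\psi,\theta$ close to $90^\circ$, we have $\det G$ close to $1$, so invertibility is automatic and the formulas are well posed.
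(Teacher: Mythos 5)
Your proposal is correct and follows essentially the same route as the paper: expand $\vec{a}$ in the skew basis, take scalar products with $\vec{e_1},\vec{e_2},\vec{e_3}$ to obtain the Gram-matrix system, and solve it. The paper simply states that the system is solved (and notes unique solvability via diagonal dominance in a remark), whereas you make the solution explicit through the adjugate formula and justify invertibility via $\det G>0$; these are only presentational differences.
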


\begin{proof}
We have 
\begin{equation}
\vec{a}=\overline{a_x}\vec{e_1}+\overline{a_y}\vec{e_2}+\overline{a_z}\vec{e_3},
\label{eq:2}
\end{equation}
where $\vec{e_1}$, $\vec{e_2}$ and $\vec{e_3}$ are the unit vectors in the $x$, $y$ and $z$ directions,
respectively. Consecutively, we take the scalar products of the both sides of (\ref{eq:2}) with $\vec{e_1}$, $\vec{e_2}$, $\vec{e_3}$ and obtain:
\begin{equation}
\begin{aligned}
(\vec{a},\vec{e_1})=\overline{a_x}+\overline{a_y}\cos\phi+\overline{a_z}\cos\psi\\
(\vec{a},\vec{e_2})=\overline{a_x}\cos\phi+\overline{a_y}+\overline{a_z}\cos\theta\\
(\vec{a},\vec{e_3})=\overline{a_x}\cos\psi+\overline{a_y}\cos\theta+\overline{a_z}
\end{aligned}
\label{eq:3}
\end{equation}
Taking into account (\ref{eq:1}) we obtain 
\begin{equation}
\left[
\begin{array}{ccc}
1&\cos\phi&\cos\psi\\
\cos\phi&1&\cos\theta\\
\cos\psi&\cos\theta&1
\end{array}
\right]
\left[
\begin{array}{c}
\overline{a_x}\\
\overline{a_y}\\
\overline{a_z}
\end{array}
\right]=
\left[
\begin{array}{c}
a_x\\
a_y\\
a_z
\end{array}
\right]
\label{eq:12}
\end{equation}
We solve the system (\ref{eq:12}) and obtain (\ref{affineByReal}).
\end{proof}
\begin{zabelejka}
Let us note that the system (\ref{eq:12}) has diagonally dominant matrix and, therefore, it has a unique solution.
\end{zabelejka}
\begin{zabelejka}
In matrix form, equations (\ref{affineByReal}) can be rewritten as
\[\bm{\overline{a}}=\overline{T}\bm{a},\]
where:
\[
\begin{aligned}
&\bm{\overline{a}}=\left[\begin{array}{c}
\overline{a_x}\\
\overline{a_y}\\
\overline{a_z}
 \end{array}\right],\ 
 \overline{T}=\frac{1}{den}\left[
\begin{array}{ccc}
-\sin^2\theta &\cos\phi-\cos\theta\cos\psi&\cos\psi-\cos\phi\cos\theta\\
\cos\phi-\cos\theta\cos\psi&-\sin^2\psi&\cos\theta-\cos\phi\cos\psi\\
\cos\psi-\cos\phi\cos\theta&\cos\theta-\cos\phi\cos\psi&-\sin^2\phi
\end{array}
\right],\\
&den={-1+\cos^2\phi+\cos^2\psi+\cos^2\theta-
2\cos\phi\cos\psi\cos\theta},\ 
\bm{a}=\left[\begin{array}{c}
a_x\\
a_y\\
a_z
\end{array}\right]
\end{aligned}
\]
\label{remark:1}
\end{zabelejka}
\begin{lemma}
For the acceleration $a_{nonorth}$ the following holds true
\begin{equation}
a_{nonorth}=\sqrt{\overline{a_x}^2+\overline{a_y}^2+\overline{a_z}^2+2\overline{a_x}\overline{a_y}\cos\phi+2\overline{a_x}\overline{a_z}\cos\psi+2\overline{a_y}\overline{a_z}\cos\theta}.
\label{a_nonorth}
\end{equation}
\end{lemma}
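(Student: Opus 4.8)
The plan is to identify $a_{nonorth}$ with the true Euclidean magnitude $\|\vec{a}\|$ of the acceleration vector and to obtain it by expanding the scalar product $(\vec{a},\vec{a})$ in the affine basis. Starting from the representation (\ref{eq:2}), $\vec{a}=\overline{a_x}\vec{e_1}+\overline{a_y}\vec{e_2}+\overline{a_z}\vec{e_3}$, I would write $\|\vec{a}\|^2=(\vec{a},\vec{a})$ and use the bilinearity of the inner product to expand it into the double sum $\sum_{i,j}\overline{a_i}\,\overline{a_j}\,(\vec{e_i},\vec{e_j})$ over the three basis vectors.

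First I would evaluate the nine resulting scalar products. The three diagonal ones equal $1$, since each $\vec{e_k}$ is a unit vector, and they contribute $\overline{a_x}^2+\overline{a_y}^2+\overline{a_z}^2$. For the off-diagonal ones I would invoke the definitions of the angles in (\ref{angles}), which give $(\vec{e_1},\vec{e_2})=\cos\phi$, $(\vec{e_1},\vec{e_3})=\cos\psi$ and $(\vec{e_2},\vec{e_3})=\cos\theta$; by the symmetry of the inner product each such term is counted twice, producing the factors of $2$ in (\ref{a_nonorth}). Collecting the terms and taking the square root then yields the claimed formula.

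Equivalently, one may observe that the whole computation amounts to evaluating the quadratic form $\bm{\overline{a}}^{\,T} G\, \bm{\overline{a}}$, where $G$ is precisely the Gram matrix of $\vec{e_1},\vec{e_2},\vec{e_3}$ already appearing on the left-hand side of (\ref{eq:12}); its entries are exactly the pairwise products $(\vec{e_i},\vec{e_j})$. This viewpoint makes transparent why the same matrix governs both the recovery of the affine projections in Lemma \ref{lemma2} and the length formula here.

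I do not expect any genuine obstacle, since the result follows directly from the bilinearity of the scalar product. The only point demanding care is the bookkeeping: matching each cosine to the correct pair of axes as fixed in (\ref{angles}), and making sure the three mixed products each carry the factor $2$ rather than being double-counted or dropped.
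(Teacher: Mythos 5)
Your proposal is correct and follows essentially the same route as the paper: both expand the scalar square of $\vec{a}=\overline{a_x}\vec{e_1}+\overline{a_y}\vec{e_2}+\overline{a_z}\vec{e_3}$ by bilinearity, use that the $\vec{e_i}$ are unit vectors, and substitute the cosines from (\ref{angles}). Your Gram-matrix remark is a pleasant extra observation linking the lemma to (\ref{eq:12}), but it is not a different proof.
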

\begin{proof}
Taking $a=a_{nonorth}=(\overline{a_x},\overline{a_y},\overline{a_z})$ in (\ref{eq:2}) and taking the 
scalar squares of both sides, we obtain
\[
(a_{nonorth},a_{nonorth})=\overline{a_x}^2+\overline{a_y}^2+\overline{a_z}^2+2\overline{a_x}\overline{a_y}(\vec{e_1},\vec{e_2})+
2\overline{a_x}\overline{a_z}(\vec{e_1},\vec{e_3})+2\overline{a_y}\overline{a_z}(\vec{e_2},\vec{e_3}),
\]
where $\vec{e_1}$, $\vec{e_2}$, $\vec{e_3}$ are the unit vectors along the axis. Taking into account (\ref{angles}), (\ref{a_nonorth}) follows directly.
\end{proof}
We substitute (\ref{affineByReal}) into (\ref{a_nonorth}) and simplify to obtain the following:
\begin{proposition}
The equality
\begin{equation}
a_{nonorth}=\sqrt{\frac{numerator}{denominator}},
\label{aNonorth}
\end{equation}
where
\[
\begin{aligned}
	numerator=&2(-1+1\cos2\theta)a_x^2+2(-1+1\cos2\psi)a_y^2+2(-1+1\cos2\phi)a_z^2\\
	&+8(\cos\phi-\cos\psi\cos\theta)a_xa_y+8(\cos\theta-\cos\phi\cos\psi)a_ya_z\\
	&+8(\cos\psi-\cos\phi\cos\theta)a_xa_z;\\
denominator=&2+2\cos2\phi+2\cos2\psi+2\cos2\theta-8\cos\phi\cos\psi\cos\theta
\end{aligned}
\]
holds true.
\label{theProposition}
\end{proposition}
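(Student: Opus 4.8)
The plan is to avoid the literal term-by-term substitution announced in the text and instead recast the whole computation as a single quadratic form, which I expect to cut the bookkeeping dramatically. First I would rewrite the conclusion of the preceding Lemma, equation (\ref{a_nonorth}), in matrix form as
\[
a_{nonorth}^2 = \bm{\overline{a}}^{\top} G\, \bm{\overline{a}}, \qquad
G = \left[\begin{array}{ccc} 1 & \cos\phi & \cos\psi \\ \cos\phi & 1 & \cos\theta \\ \cos\psi & \cos\theta & 1 \end{array}\right],
\]
observing that $G$ is precisely the symmetric Gram matrix that appears on the left-hand side of (\ref{eq:12}), since its diagonal is $1,1,1$ and its off-diagonal entries $\cos\phi,\cos\psi,\cos\theta$ reproduce exactly the three mixed terms of (\ref{a_nonorth}).

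Next I would invoke (\ref{eq:12}) itself, which states $G\,\bm{\overline{a}} = \bm{a}$; by the diagonal dominance noted in the remark after Lemma \ref{lemma2}, $G$ is invertible, so $\bm{\overline{a}} = G^{-1}\bm{a}$. Substituting this and using the symmetry of $G$ gives
\[
a_{nonorth}^2 = (G^{-1}\bm{a})^{\top} G\,(G^{-1}\bm{a}) = \bm{a}^{\top} G^{-1} \bm{a}.
\]
By Remark \ref{remark:1} the inverse is already known explicitly, $G^{-1} = \overline{T} = \frac{1}{den}M$ with $M$ the symmetric matrix displayed there and $den = -1+\cos^2\phi+\cos^2\psi+\cos^2\theta-2\cos\phi\cos\psi\cos\theta$. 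Hence $a_{nonorth}^2 = \frac{1}{den}\,\bm{a}^{\top} M \bm{a}$, and only one quadratic form remains to be expanded.

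Finally I would read the coefficients of $\bm{a}^{\top} M \bm{a}$ directly off $M$: the diagonal gives $-\sin^2\theta\,a_x^2 - \sin^2\psi\,a_y^2 - \sin^2\phi\,a_z^2$ and the off-diagonal entries give the cross terms $2(\cos\phi-\cos\theta\cos\psi)a_xa_y$, $2(\cos\psi-\cos\phi\cos\theta)a_xa_z$, $2(\cos\theta-\cos\phi\cos\psi)a_ya_z$. To match the claimed shape I would multiply numerator and denominator of $\frac{\bm{a}^{\top} M \bm{a}}{den}$ by $4$ and apply $\cos 2x = 2\cos^2 x - 1$: in the numerator $-4\sin^2\theta = 2(-1+\cos2\theta)$ and similarly for the other squares, while in the denominator the same identity turns $4\,den$ into $2 + 2\cos2\phi + 2\cos2\psi + 2\cos2\theta - 8\cos\phi\cos\psi\cos\theta$. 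This is exactly (\ref{aNonorth}).

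The main obstacle is organizational rather than conceptual, and it is precisely what the reformulation is designed to sidestep. Following the announced route of substituting (\ref{affineByReal}) into (\ref{a_nonorth}) literally, each affine projection carries a factor $1/den$, so the quadratic form acquires $1/den^2$, and one must then verify that every collected coefficient, a degree-four trigonometric polynomial, contains a common factor of $den$ that cancels one power in the denominator. Confirming that cancellation term by term is where sign errors are most likely to creep in. Writing $a_{nonorth}^2 = \bm{a}^{\top} G^{-1}\bm{a}$ removes this difficulty, because the single surviving power of $den$ is built into $G^{-1} = \frac{1}{den}M$ from the outset, leaving nothing to expand beyond one symmetric quadratic form and one application of the double-angle identity.
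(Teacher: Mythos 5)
Your proof is correct, and it takes a genuinely different route from the paper. The paper's proof is a one-line brute-force computation: substitute the componentwise formulas (\ref{affineByReal}) into (\ref{a_nonorth}) and simplify, which — exactly as you anticipate in your last paragraph — produces a rational expression with $1/den^2$ and requires verifying that a factor of $den$ cancels from a degree-four trigonometric numerator. Your reformulation sidesteps this entirely by noticing the key structural fact the paper never exploits: the matrix of the quadratic form in (\ref{a_nonorth}) is the same Gram matrix $G$ that appears on the left of (\ref{eq:12}), so $a_{nonorth}^2=\bm{\overline{a}}^{\top}G\,\bm{\overline{a}}$ collapses, via $G\bm{\overline{a}}=\bm{a}$ and the symmetry of $G$, to $\bm{a}^{\top}G^{-1}\bm{a}$, where $G^{-1}=\overline{T}$ is already available from Lemma \ref{lemma2} and Remark \ref{remark:1}; all coefficients of the proposition, including the single surviving power of $den$, are then read off directly, with only the double-angle identities left to apply (and your identifications $-4\sin^2\theta=2(-1+\cos 2\theta)$ and $4\,den=2+2\cos 2\phi+2\cos 2\psi+2\cos 2\theta-8\cos\phi\cos\psi\cos\theta$ are both correct). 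What the paper's approach buys is that it needs no observation at all — it is pure mechanical simplification, presumably done with computer algebra; what yours buys is a proof that is checkable by hand, and an explanation of \emph{why} the final formula has only one power of $den$ in the denominator rather than two. One small further simplification you could make: since $G\bm{\overline{a}}=\bm{a}$, you can write $a_{nonorth}^2=\bm{\overline{a}}^{\top}(G\bm{\overline{a}})=\bm{\overline{a}}^{\top}\bm{a}=(\overline{T}\bm{a})^{\top}\bm{a}=\bm{a}^{\top}\overline{T}\bm{a}$, avoiding the sandwich $(G^{-1})^{\top}GG^{-1}$ altogether.
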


Now we are ready to explain the algorithm for estimating the shifts and the
scaling coefficients in (\ref{theModel}). When at rest the sensor should measure the gravitational
acceleration $g$. Therefore, from Proposition \ref{theProposition} it follows that 
\[g=a_{nonorth},\]
where $a_{nonorth}$ is defined by (\ref{aNonorth}).
We substitute (\ref{theModel}) in $a_{nonorth}$ and obtain $\overline{a_{nonorth}}(s_1,s_2,s_3,b_1,b_2,b_3,\phi,\psi,\theta)$.  We use a least-squares fit to
find the values for $s_1,s_2,s_3,b_1,b_2,b_3,\phi,\psi,\theta$.

%EXAMPLE
\section{Expressing the Linear Acceleration in an Orthonormal System}
In this section we shall derive an explicit expression for the calibrated values of the acceleration in an orthonormal cooridnate system, given the data, measured by the sensors. 
\begin{proposition}
\label{prop:orthSyst}
Let $\vec{e_1}$, $\vec{e_2}$, $\vec{e_3}$ be the unit vectors in the directions, defined by the accelerometers' axis. Let us introduce an orthonormal coordinate system $O_{\vec{f_1},\vec{f_2},\vec{f_3}}$, where $\vec{f_1}\equiv \vec{e_1}$, $\vec{f_2}$ is in the $\vec{e_1}\vec{e_2}$ plane. Then the following equality is valid:
\begin{equation}
\left[
	\begin{array}{c}
		\vec{e_1}\\
		\vec{e_2}\\
		\vec{e_3}
	\end{array}
\right]=\overline{T}^{orth} \left[
	\begin{array}{c}
		\vec{f_1}\\
		\vec{f_2}\\
		\vec{f_3}
	\end{array}
\right],
\end{equation}
where
\begin{equation}
\overline{T}^{orth}=\left[\begin{array}{ccc}
		1&\cos\phi&\cos\psi\\ && \\
		0&\sin\phi&\dfrac{\cos\theta-\cos\phi\cos\psi}{\sin\phi}\\ && \\
		0&0&\dfrac{\sqrt{1-\cos^2\phi-\cos^2\psi-\cos^2\theta+2\cos\phi\cos\psi\cos\theta}}{\sin\phi}
	\end{array}\right]
\label{eq:125}
\end{equation}	
\end{proposition}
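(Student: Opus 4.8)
The plan is to build the orthonormal frame $\vec{f_1},\vec{f_2},\vec{f_3}$ by Gram--Schmidt orthonormalization of $\vec{e_1},\vec{e_2},\vec{e_3}$ and then to identify the entries of $\overline{T}^{orth}$ with the coordinates of the $\vec{e_i}$ in that frame. I would start from the prescription $\vec{f_1}:=\vec{e_1}$. Because $\vec{f_2}$ must lie in the $\vec{e_1}\vec{e_2}$ plane and be orthogonal to $\vec{f_1}$, it is determined (up to sign, fixed by orientation) by
\[
\vec{f_2}=\frac{\vec{e_2}-(\vec{e_2},\vec{e_1})\,\vec{e_1}}{\bigl\|\vec{e_2}-(\vec{e_2},\vec{e_1})\,\vec{e_1}\bigr\|}=\frac{\vec{e_2}-\cos\phi\,\vec{e_1}}{\sin\phi},
\]
where I used $(\vec{e_1},\vec{e_2})=\cos\phi$ from (\ref{angles}) together with $\bigl\|\vec{e_2}-\cos\phi\,\vec{e_1}\bigr\|=\sqrt{1-\cos^2\phi}=\sin\phi$. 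Finally I take $\vec{f_3}$ to be the unit normal to the $\vec{e_1}\vec{e_2}$ plane, e.g. $\vec{f_3}:=\vec{f_1}\times\vec{f_2}$, so that $\{\vec{f_1},\vec{f_2},\vec{f_3}\}$ is orthonormal.

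Next, orthonormality gives the decomposition $\vec{e_i}=\sum_j(\vec{e_i},\vec{f_j})\,\vec{f_j}$, so the entries of $\overline{T}^{orth}$ in (\ref{eq:125}) are exactly the scalar products $(\vec{e_k},\vec{f_i})$, the coordinates of $\vec{e_k}$ forming its $k$-th column. Most of these are immediate from (\ref{angles}) and the formula for $\vec{f_2}$: along $\vec{f_1}$ one gets $(\vec{e_1},\vec{f_1})=1$, $(\vec{e_2},\vec{f_1})=\cos\phi$, $(\vec{e_3},\vec{f_1})=\cos\psi$; next $(\vec{e_2},\vec{f_2})=\sin\phi$ and
\[
(\vec{e_3},\vec{f_2})=\frac{(\vec{e_3},\vec{e_2})-\cos\phi\,(\vec{e_3},\vec{e_1})}{\sin\phi}=\frac{\cos\theta-\cos\phi\cos\psi}{\sin\phi}.
\]
The vectors $\vec{e_1},\vec{e_2}$ lie in the $\vec{f_1}\vec{f_2}$ plane, hence have vanishing $\vec{f_3}$ component; this supplies the zeros and the upper-triangular shape.

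The single entry not obtained directly is $(\vec{e_3},\vec{f_3})$, and computing it is the only real work. I would fix it from $\|\vec{e_3}\|=1$: since $(\vec{e_3},\vec{f_1}),(\vec{e_3},\vec{f_2}),(\vec{e_3},\vec{f_3})$ are the coordinates of a unit vector,
\[
(\vec{e_3},\vec{f_3})^2=1-\cos^2\psi-\frac{(\cos\theta-\cos\phi\cos\psi)^2}{\sin^2\phi}.
\]
Placing this over $\sin^2\phi=1-\cos^2\phi$ and expanding the square collapses the right-hand side to $\dfrac{1-\cos^2\phi-\cos^2\psi-\cos^2\theta+2\cos\phi\cos\psi\cos\theta}{\sin^2\phi}$; taking the positive root gives the $(3,3)$ entry in (\ref{eq:125}). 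Assembling the nine coordinates then yields the stated matrix.

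I expect the only genuine obstacle to be justifying that the radicand is nonnegative and that the positive square root is the correct choice. I would settle this by recognizing the numerator $1-\cos^2\phi-\cos^2\psi-\cos^2\theta+2\cos\phi\cos\psi\cos\theta$ as $-den$ from Remark \ref{remark:1}, i.e. as the Gram determinant of $\vec{e_1},\vec{e_2},\vec{e_3}$. This determinant is strictly positive because the Gram matrix in (\ref{eq:12}) is positive definite under the near-orthogonality assumption $90^\circ\pm2\%$ (it is diagonally dominant with positive diagonal). Positivity guarantees that $\vec{e_1},\vec{e_2},\vec{e_3}$ are linearly independent, that $\vec{f_3}$ is well defined, and that the orientation of $\vec{f_3}$ can be chosen so that $(\vec{e_3},\vec{f_3})>0$, which justifies the positive root and completes the proof.
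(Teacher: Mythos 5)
Your proof is correct and follows essentially the same route as the paper's: the paper likewise takes $\vec{f_2}$ to be the normalized component of $\vec{e_2}$ orthogonal to $\vec{e_1}$ (arriving at $\vec{f_2}=-\frac{\cos\phi}{\sin\phi}\vec{e_1}+\frac{1}{\sin\phi}\vec{e_2}$), computes $(\vec{e_3},\vec{f_2})$ exactly as you do, and fixes the $(3,3)$ entry from the fact that the squared coordinates of the unit vector $\vec{e_3}$ in the orthonormal frame sum to one. The only real difference is the sign of the square root: the paper simply appeals to $\eta=\angle(\vec{e_3},\vec{f_3})$ being a small angle so that $\cos\eta>0$, whereas your identification of the radicand with the Gram determinant and the orientation choice for $\vec{f_3}$ is a more careful justification of the same point.
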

\begin{proof}
Let us denote (see Fig. \ref{fig:orth-aff1})
\[
	\xi:=\angle(\vec{e_3},\vec{f_2}), \ \eta:=\angle(\vec{e_3}, \vec{f_3}).
\]
\begin{figure}[!h]
\centering
\includegraphics[width=0.7\textwidth]{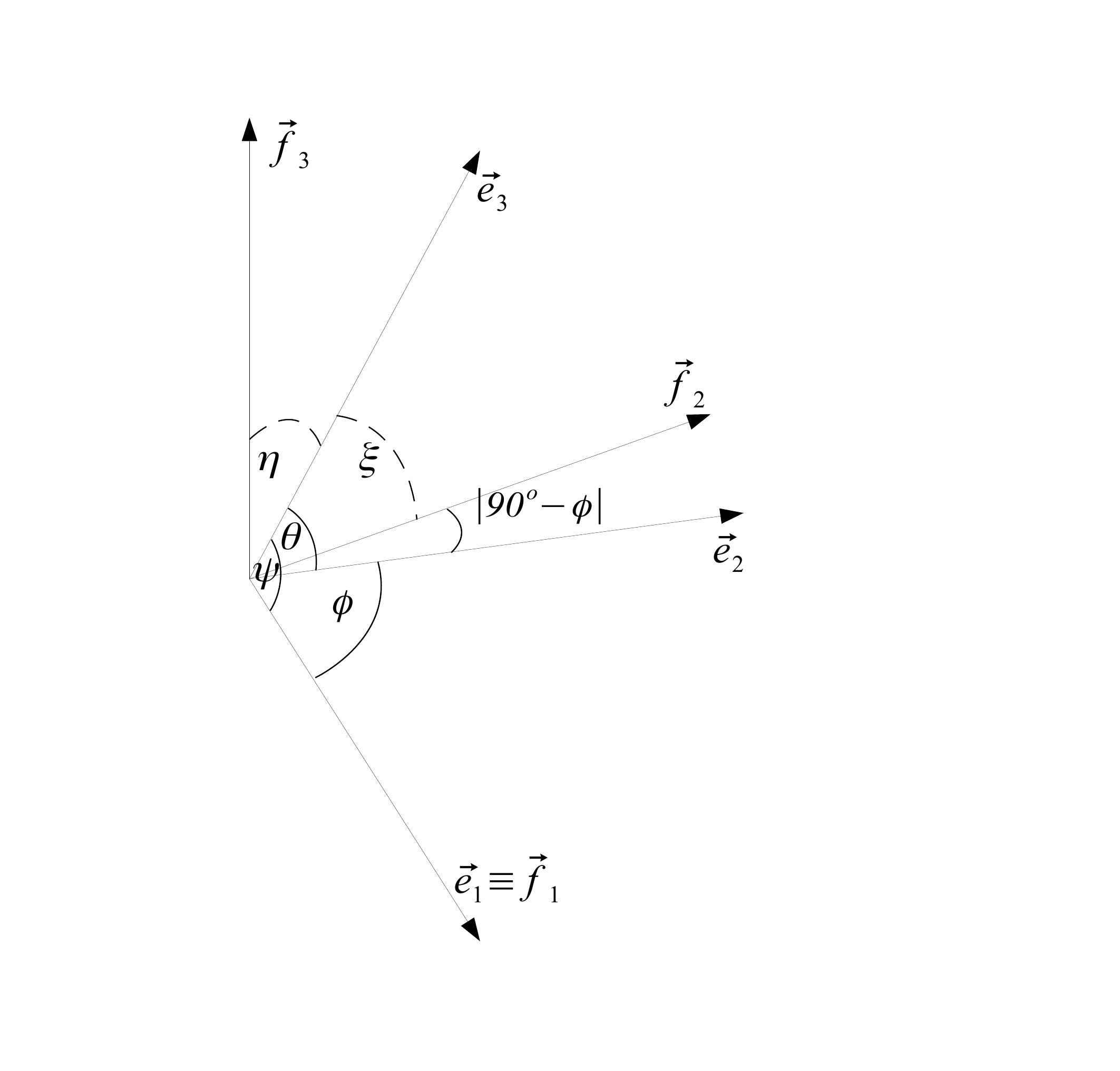}
\caption{Orthonormal coordinte system}
\label{fig:orth-aff1}
\end{figure}

Since $\vec{f_2}$ lies in the $\vec{e_1}\vec{e_2}$ plane and $\vec{f_1}\equiv\vec{e_1}$ we have the relationship
\begin{equation}
\vec{e_2}=b_1 \vec{f_1} + b_2 \vec{f_2},
\label{eq:123}
\end{equation}
where $b_1$ and $b_2$ are real numbers. By taking the inner product of (\ref{eq:123}) successively with $\vec{f_1}$ and $\vec{f_2}$ we obtain
\[
\begin{aligned}
b_1=\cos\phi, \ 
b_2=\cos |90^\circ-\phi|=\sin\phi
\end{aligned}
\]
Therefore, we have 
\[\vec{e_2}=\cos\phi\vec{f_1}+\sin\phi\vec{f_2}\]
and, thus,
\begin{equation}\vec{f_2}=-\frac{\cos\phi}{\sin\phi}\vec{e_1}+\frac{1}{\sin\phi}\vec{e_2}.
\label{eq:124}
\end{equation}
Analogously, we obtain
\begin{equation*}
\vec{e_3}=\cos\psi\vec{f_1}+\cos\xi\vec{f_2}+\cos\eta\vec{f_3},
\end{equation*}
By taking into account (\ref{eq:124}), we obtain
\[
\begin{aligned}
\cos\xi&=\left<\vec{e_3},\vec{f_2}\right>=\left<\vec{e_3},-\frac{\cos\phi}{\sin\phi}\vec{e_1}+\frac{1}{\sin\phi}\vec{e_2}\right>\\
&=-\frac{\cos\phi}{\sin\phi}\cos\psi+\frac{\cos\theta}{\sin\phi}=\frac{\cos\theta-\cos\phi\cos\psi}{\sin\phi}.
\end{aligned}
\]
Furthermore, since $\vec{e_3}$ is a vector in the orthonormal coordinate system $O_{\vec{f_1}\vec{f_2}\vec{f_3}}$ and the angles with the axis being $\psi$, $\xi$ and $\eta$, respectively the following holds true:
\[
\begin{aligned} 
&\cos^2\eta+\cos^2\xi+\cos^2\psi=1\\
&\cos\eta=\sqrt{1-\cos^2\xi+\cos^2\psi}\\
&=\dfrac{\sqrt{1-\cos^2\phi-\cos^2\psi-\cos^2\theta+2\cos\phi\cos\psi\cos\theta}}{\sin\phi}
\end{aligned}
\]  
Let us note, that we have used the fact that $\eta$ is small angle and, therefore, $\cos\eta>0$.
\end{proof}
\begin{corollary}
Let us have an arbitrary acceleration $\vec{a}$ with coordinates in the affine coordinate system $O_{\vec{e_1}\vec{e_2}\vec{e_3}}$, defined by the accelerometers' axis, $\overline{a_x}$, $\overline{a_y}$, $\overline{a_z}$, respectively. Let the corresponding coordinates in the $O_{\vec{f_1}\vec{f_2}\vec{f_3}}$ coordinate system (defined as in Proposition \ref{prop:orthSyst}) be
$\overline{a_x}^{orth}$, $\overline{a_y}^{orth}$ and $\overline{a_z}^{orth}$. Then we have
\begin{equation}
\left[
	\begin{array}{c}
		\overline{a_x}^{orth}\\
		\overline{a_y}^{orth}\\
		\overline{a_z}^{orth}
	\end{array}
\right]=\overline{T}^{orth} \left[
	\begin{array}{c}
		\overline{a_x}\\
	\overline{a_y}\\
	\overline{a_z}
	\end{array}
\right],
\end{equation}
where $\overline{T}^{orth}$ is defined as (\ref{eq:125}).
\label{cor:1}
\end{corollary}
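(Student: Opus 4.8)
The plan is to exploit the fact that $\vec{a}$ is one and the same geometric vector, merely resolved along two different bases, and to read off the orthonormal coordinates as inner products. Concretely, I would begin by writing the two decompositions
\[
\vec{a}=\overline{a_x}\vec{e_1}+\overline{a_y}\vec{e_2}+\overline{a_z}\vec{e_3}
=\overline{a_x}^{orth}\vec{f_1}+\overline{a_y}^{orth}\vec{f_2}+\overline{a_z}^{orth}\vec{f_3}.
\]
Since $\{\vec{f_1},\vec{f_2},\vec{f_3}\}$ is orthonormal, each orthonormal coordinate is simply the projection of $\vec{a}$ onto the corresponding axis, i.e.\ $\overline{a_x}^{orth}=\langle\vec{a},\vec{f_1}\rangle$, and similarly for the other two. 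This turns the ``change of coordinates'' into a collection of scalar products, which are comfortable to handle.

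Next I would substitute the $\vec{e}$-decomposition of $\vec{a}$ into these three inner products, obtaining for each $j$ an expression of the form $\overline{a_x}\langle\vec{e_1},\vec{f_j}\rangle+\overline{a_y}\langle\vec{e_2},\vec{f_j}\rangle+\overline{a_z}\langle\vec{e_3},\vec{f_j}\rangle$. The nine numbers $\langle\vec{e_i},\vec{f_j}\rangle$ are exactly the coefficients already computed in the proof of Proposition \ref{prop:orthSyst}: there we found $\vec{e_1}=\vec{f_1}$, $\vec{e_2}=\cos\phi\,\vec{f_1}+\sin\phi\,\vec{f_2}$, and $\vec{e_3}=\cos\psi\,\vec{f_1}+\cos\xi\,\vec{f_2}+\cos\eta\,\vec{f_3}$, with $\cos\xi$ and $\cos\eta$ as in (\ref{eq:125}). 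Collecting the three resulting scalar equations into a single matrix identity then yields $\bm{\overline{a}}^{orth}=\overline{T}^{orth}\bm{\overline{a}}$.

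The one point that genuinely requires care — and which I regard as the main obstacle — is the orientation of the matrix, i.e.\ making sure the coordinate transformation is governed by $\overline{T}^{orth}$ itself and not by its transpose. In a general change of basis, if the old basis vectors are expanded in the new one by a matrix $M$, then the coordinates transform by $M^{\top}$, so one must verify that the bookkeeping reproduces precisely the array displayed in (\ref{eq:125}). The inner-product route makes this transparent: the computation shows that the $(r,c)$ entry picked up is $\langle\vec{e_c},\vec{f_r}\rangle$, so the $c$-th column of the transformation matrix consists of the $\vec{f}$-coordinates of $\vec{e_c}$, which is exactly how $\overline{T}^{orth}$ was assembled in Proposition \ref{prop:orthSyst}. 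Matching the three rows against (\ref{eq:125}) then confirms the claim.

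As a cross-check, one could avoid inner products altogether and argue purely by substitution: replace each $\vec{e_i}$ in the $\vec{e}$-decomposition of $\vec{a}$ by its expansion in the $\vec{f_j}$, regroup the terms by $\vec{f_1},\vec{f_2},\vec{f_3}$, and invoke the uniqueness of the representation in the orthonormal basis to identify the coefficients with $\overline{a_x}^{orth},\overline{a_y}^{orth},\overline{a_z}^{orth}$. This is essentially the same computation, with orthonormality used only at the final step; I would nevertheless present the inner-product version as the primary proof, since it keeps the indexing, and hence the matrix orientation, under control throughout.
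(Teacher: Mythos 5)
Your proof is correct. The paper gives no proof of this corollary at all --- it is stated as an immediate consequence of Proposition \ref{prop:orthSyst} --- so your argument supplies the missing derivation, and in the natural way: since $\{\vec{f_1},\vec{f_2},\vec{f_3}\}$ is orthonormal, $\overline{a_x}^{orth}=\langle\vec{a},\vec{f_1}\rangle$ and so on, and substituting the expansions $\vec{e_1}=\vec{f_1}$, $\vec{e_2}=\cos\phi\,\vec{f_1}+\sin\phi\,\vec{f_2}$, $\vec{e_3}=\cos\psi\,\vec{f_1}+\cos\xi\,\vec{f_2}+\cos\eta\,\vec{f_3}$ from the proposition's proof yields exactly the three rows of (\ref{eq:125}). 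Your insistence on tracking the matrix orientation is the right instinct, and here it is more than bookkeeping: read literally, the displayed equation of Proposition \ref{prop:orthSyst} (with $\overline{T}^{orth}$ multiplying the column of basis vectors $\vec{f_j}$) would assert $\vec{e_1}=\vec{f_1}+\cos\phi\,\vec{f_2}+\cos\psi\,\vec{f_3}$, contradicting $\vec{f_1}\equiv\vec{e_1}$; what the proposition's proof actually establishes is the transposed relation, in which the $\vec{f}$-coordinates of $\vec{e_i}$ form the $i$-th \emph{column} of $\overline{T}^{orth}$. Since coordinates transform by the transpose of the matrix that expands the old basis in the new one, the corollary as stated --- with $\overline{T}^{orth}$ itself rather than its transpose --- is exactly right, and your inner-product computation is the cleanest way to confirm that the transposition lands on the correct side.
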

Let us denote 
\[\bm{\overline{a}}^{orth}:=\left[
	\begin{array}{c}
		\overline{a_x}^{orth}\\
		\overline{a_y}^{orth}\\
		\overline{a_z}^{orth}
	\end{array}
\right],\ \bm{\overline{a}}:=\left[
	\begin{array}{c}
		\overline{a_x}\\
	\overline{a_y}\\
	\overline{a_z}
	\end{array}
\right].\]
Furthermore, let is take into account (\ref{theModelMatrix}), Remark \ref{remark:1} and Corollary \ref{cor:1} and the notation, introduced in those. Then we obtain the following important result:
\begin{theorem}
Let $\widehat{a_x}$, $\widehat{a_y}$ and $\widehat{a_z}$ be the values shown by the accelerometers at a certain time and $\overline{a_x}^{orth}$, $\overline{a_y}^{orth}$ and $\overline{a_z}^{orth}$ be the coordinates of the corresponding acceleration in the orthonormal coordinate system, defined in Proposition \ref{prop:orthSyst}. Then the following holds true:
\[\bm{\overline{a}}^{orth}=\overline{T}^{orth}\overline{T}T\bm{\widehat{a}}-\overline{T}^{orth}\overline{T}\bm{s}.\]
\end{theorem}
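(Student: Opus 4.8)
The plan is to recognize that this theorem is nothing more than the composition of three linear transformations that have already been established in the preceding material, so the proof should proceed by a short chain of substitutions rather than any fresh computation. The three ingredients are: equation (\ref{theModelMatrix}), which gives the calibrated acceleration in terms of the raw sensor readings as $\bm{a}=T\bm{\widehat{a}}-\bm{s}$; Remark \ref{remark:1}, which expresses the affine projections in terms of the calibrated values as $\bm{\overline{a}}=\overline{T}\bm{a}$; and Corollary \ref{cor:1}, which converts the affine projections into orthonormal coordinates via $\bm{\overline{a}}^{orth}=\overline{T}^{orth}\bm{\overline{a}}$. Each of these is a linear relation with an explicitly known matrix, so the whole construction is a pipeline $\bm{\widehat{a}}\mapsto\bm{a}\mapsto\bm{\overline{a}}\mapsto\bm{\overline{a}}^{orth}$.

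First I would start from the outermost relation, Corollary \ref{cor:1}, writing $\bm{\overline{a}}^{orth}=\overline{T}^{orth}\bm{\overline{a}}$. Then I would substitute the expression for $\bm{\overline{a}}$ from Remark \ref{remark:1} to obtain $\bm{\overline{a}}^{orth}=\overline{T}^{orth}\overline{T}\bm{a}$. Finally I would substitute the model equation (\ref{theModelMatrix}) for $\bm{a}$, giving
\[
\bm{\overline{a}}^{orth}=\overline{T}^{orth}\overline{T}\left(T\bm{\widehat{a}}-\bm{s}\right).
\]
Distributing the product $\overline{T}^{orth}\overline{T}$ across the difference yields the stated formula $\bm{\overline{a}}^{orth}=\overline{T}^{orth}\overline{T}T\bm{\widehat{a}}-\overline{T}^{orth}\overline{T}\bm{s}$ directly.

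Since every step is an exact linear identity already proven, there is essentially no analytic obstacle here; the theorem is a bookkeeping statement that the calibration pipeline composes into a single affine map in the raw readings. The only point demanding care is the ordering of the matrix factors: because these are changes between (possibly non-orthogonal) coordinate systems, the matrices must be applied in the correct sequence and must not be commuted. I would therefore be careful to keep the product in the order $\overline{T}^{orth}\overline{T}T$ as it arises naturally from the nested substitutions, and to note that the well-posedness of the intermediate step is guaranteed by the invertibility of the Gram matrix in (\ref{eq:12}), which underlies the definition of $\overline{T}$ via its diagonally dominant (hence nonsingular) structure noted in the first Remark.
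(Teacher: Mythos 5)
Your proof is correct and matches the paper's intended argument exactly: the paper derives this theorem by precisely the same composition of (\ref{theModelMatrix}), Remark \ref{remark:1}, and Corollary \ref{cor:1}, leaving the substitution chain implicit. Your explicit write-out of the pipeline $\bm{\widehat{a}}\mapsto\bm{a}\mapsto\bm{\overline{a}}\mapsto\bm{\overline{a}}^{orth}$ and the distribution of $\overline{T}^{orth}\overline{T}$ over the difference is exactly what the paper relies on.
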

\section{Estimates for the Error Due to Nonorthogonality}
We want to evaluate what is the error that can be caused by not 
considering the nonorthogonality of the axis. 
Let us denote
\[err(a_x,a_y,a_z,\phi,\psi,\theta):=|a_{nonorth}-a_{orth}|.\]
Since in the calibration procedure we use the fact that when the accelerometers are at rest, then
the only acceleration that they measure is due to gravity, we consider the following extremal problem:
\begin{problem}
\begin{equation}
\begin{aligned}
&err(a_x,a_y,a_z,\phi,\psi,\theta) \longrightarrow \max,\\
&a_{nonorth}=g,\\
&0.98\pi\le\phi,\psi,\theta\le1.02\pi,
\end{aligned}
\end{equation}
where $g$ is the gravitational acceleration.
%\footnote{FUTURE WORK: До каква грешка в определяне на калибрационните
%параметри води тази грешка? До каква грешка води грешката в параметрите, породена от това? Как трябва да се подбират 
%данните?}
\end{problem}
Let us note that the restrictions for $\phi$, $\psi$, $\theta$ are imposed because of the up to 2\% nonorthogonality
of the axis. 

By solving this problem numerically, we obtain the estimate of the maximal error
\[\max err(a_x,a_y,a_z,\phi,\psi,\theta)\approx0.3130299\ m/s^2\]
This value is reached, for example at 
\[a_x=a_y=a_z\approx5.48114, \phi=\psi=\theta\approx1.60221\]

Next, we want to give an estimate for the upper bound of the relative error that can be reached when
the accelerometer measures an arbitrary acceleration. We assume consecutively that in each 
direction the accelerometer can measure acceleration up to $\pm4g$, $\pm8g$, $\pm16g$. In all the cases
the estimate is approximately the same.
%\footnote{FUTURE WORK: 1)$E(a_{nonorth})$, 2)Средно математическо очакване на относителната грешка ,3)програмка, която да рисува хистограма на разпределението на относителната грешка, където $a_{nonorth}$ е входен параметър; 4) Хистограма за $g$ за абс. грешка}
\begin{problem}
\begin{equation}
\begin{aligned}
&\left|\frac{a_{nonorth}-a_{orth}}{a_{nonorth}}\right| \longrightarrow \max,\\
&-16g\le a_x,a_y,a_z \le 16g,\\
&0.98\pi\le\phi,\psi,\theta\le1.02\pi,
\end{aligned}
\end{equation}
\end{problem}
The numerical solution of this extremal problem is approximately 3.192\%.

Next, we present a histogram which represents the distribution of the relative error for
values of $\phi$, $\psi$, $\theta$, respectively 1.53938, 1.60221, 1.60221, which are the
estimated with the calibration procedure values for the sensor used in the tests (see Fig.\ref{histogram03}).
\begin{figure}[h]
\centering
\includegraphics[width=0.75\textwidth]{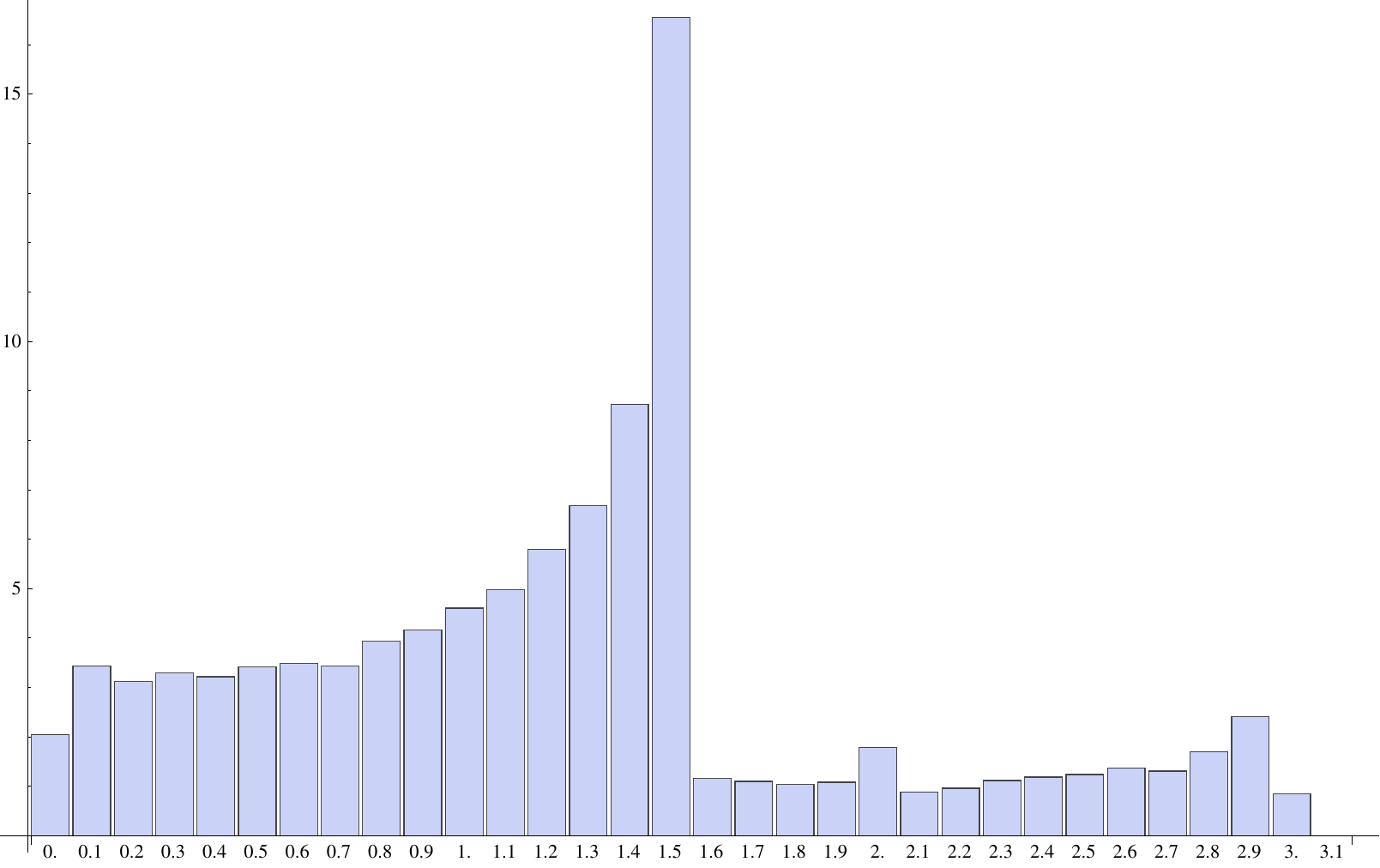}
\caption{Distribution of the relative error}
\label{histogram03}
\end{figure}

In order to be able to understand the ''jump'' in the histogram (Fig.\ref{histogram03}), we 
present geometrically the domains in $\mathbb{R}^3$ where the relative error is respectively $\le 1.4\%$, $\ge 1.4\%$,, $\le 1.6\%$,
$\ge 1.6\%$ (see Fig.\ref{relErrorFigs}). We assume again that $\phi$, $\psi$, $\theta$ are, respectively 1.53938, 1.60221, 1.60221, and $a_x$, $a_y$, $a_z$ lie in the interval $[-20,20]$.

\begin{figure}[h!]
\centering
\begin{subfigure}[b]{0.45\textwidth}
\includegraphics[width=\textwidth]{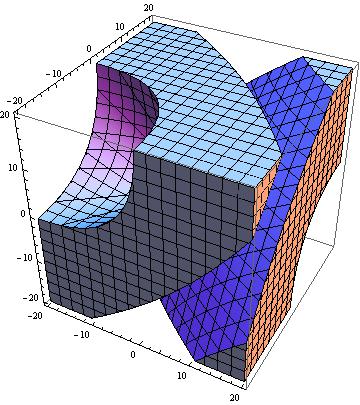}
\caption {relative error $\le 1.4\%$}
\end{subfigure}
\begin{subfigure}[b]{0.45\textwidth}
\includegraphics[width=\textwidth]{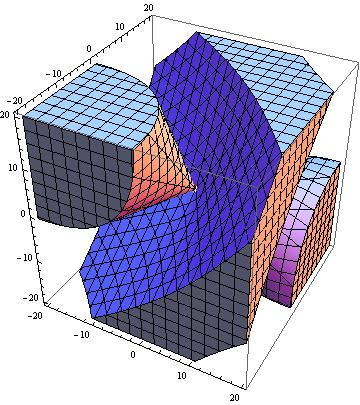}
\caption {relative error $\ge 1.4\%$}
\end{subfigure}

\begin{subfigure}[b]{0.45\textwidth}
\includegraphics[width=\textwidth]{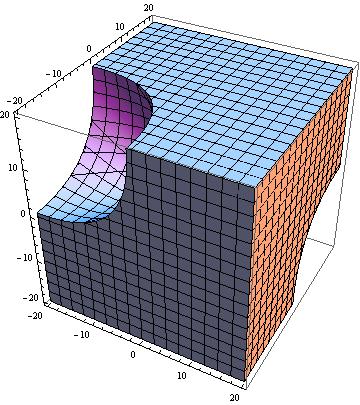}
\caption {relative error $\le 1.6\%$}
\end{subfigure}
\begin{subfigure}[b]{0.45\textwidth}
\includegraphics[width=\textwidth]{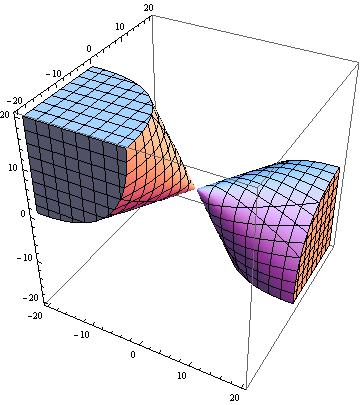}
\caption {relative error $\ge 1.6\%$}
\end{subfigure}
\caption{Domains in $\mathbb{R}^3$ for the relative error}
\label{relErrorFigs}
\end{figure}
\pagebreak

\section{Conclusion}
We have derived an explicit formula for the linear acceleration, given the values of the accelerometers and the angles between the axis of the sensors. We have explained how one can find an approximation of those angles and the shifts and scaling coefficients, given some measurements of the sensors, when the system does not undergo any acceleration. We have run some numerical experiments in order to understand better the error, caused by the nonorthogonality of the axis.

Future work on the topic could include studying the error with which the calibration parameters are evaluated with the procedure that we propose. Also, the optimal positions in which the calibration data is to be collected, should be determined. Probably, more investigation on the errors can give some insights about those positions.

\section*{Acknowledgment}
The authors would like to express their sincere gratitude to Assoc. Prof. Kiril Alexiev for his time and  valuable advices, that have helped the present work.

\end{document}